\documentclass[12pt]{article}

\usepackage{amsfonts}
\usepackage{amsmath}
\usepackage{amssymb}
\usepackage{amsthm}
\usepackage{mathtools}
\usepackage{authblk}
\usepackage{appendix}
\usepackage{bm}
\usepackage{bbm}
\usepackage{dsfont}
\usepackage{gensymb}
\usepackage{natbib}
\usepackage{subcaption}
\usepackage{lineno}
\usepackage{enumitem}
\usepackage[T1]{fontenc}
\usepackage{multirow}
\usepackage{soul}

\newcommand*{\textcal}[1]{%
  \textit{\fontfamily{qzc}\selectfont#1}%
}

\usepackage{graphicx,url}

\newtheorem{theorem}{Theorem}[section]

\newtheorem{remark}{Remark}
\newtheorem{definition}{Definition}

\usepackage[utf8]{inputenc}  

\usepackage[dvipsnames]{xcolor}

\newcommand{\bA}{\textbf{A}}
\newcommand{\bB}{\textbf{B}}
\newcommand{\bD}{\textbf{D}}
\newcommand{\bH}{\textbf{H}}
\newcommand{\bI}{\textbf{I}}
\newcommand{\bF}{\textbf{F}}
\newcommand{\bK}{\textbf{K}}
\newcommand{\bL}{\textbf{L}}
\newcommand{\bQ}{\textbf{Q}}
\newcommand{\bS}{\textbf{S}}

\newcommand{\bX}{\textbf{X}}
\newcommand{\bY}{\textbf{Y}}
\newcommand{\bZ}{\textbf{Z}}

\newcommand{\bd}{\textbf{d}}
\newcommand{\bh}{\textbf{h}}
\newcommand{\bq}{\textbf{q}}

\newcommand{\bz}{\textbf{z}}

\newcommand{\E}{\mathbb{E}}
\newcommand{\var}{\text{var}}
\newcommand{\cov}{\text{cov}}
\newcommand{\T}{\intercal}
\newcommand{\zero}{\bm{0}}
\newcommand{\one}{\bm{1}}

\newcommand{\blambda}{\bm{\lambda}}

\newcommand{\norm}[1]{\left\lVert#1\right\rVert}

\graphicspath{ {../images/} }

\newcommand{\blind}{1}
\newcommand{\spacing}{1.1}

\addtolength{\oddsidemargin}{-.5in}%
\addtolength{\evensidemargin}{-1in}%
\addtolength{\textwidth}{1in}%
\addtolength{\textheight}{1.7in}%
\addtolength{\topmargin}{-1in}%

\begin{document}

\def\spacingset#1{\renewcommand{\baselinestretch}%
{#1}\small\normalsize} \spacingset{1}

\if1\blind
{
  \title{\bf Generalised Bayes Linear Inference}
    \author[1]{Lachlan Astfalck}
    \author[2]{Cassandra Bird}
    \author[3]{Daniel Williamson}
    \affil[1]{School of Physics, Mathematics and Computing, The University of Western Australia, Crawley, Australia}
    \affil[2]{Department of Mathematics and Statistics, University of Exeter, Exeter, UK}
    \affil[3]{Land Environment Economics and Policy Institute, Department of Economics, University of Exeter, Exeter, UK}
    
    \setcounter{Maxaffil}{0}
    \renewcommand\Affilfont{\itshape\small}
  \maketitle
} \fi

\if0\blind
{
  \bigskip
  \bigskip
  \bigskip
  \begin{center}
    {\LARGE\bf Generalised Bayes Linear Inference}
\end{center}
  \medskip
} \fi

\bigskip
\begin{abstract}
Motivated by big data and the vast parameter spaces in modern machine learning models, optimisation approaches to Bayesian inference have seen a surge in popularity in recent years. In this paper, we address the connection between the popular new methods termed generalised Bayesian inference and Bayes linear methods. We propose a further generalisation to Bayesian inference that unifies these and other recent approaches by considering the Bayesian inference problem as one of finding the closest point in a particular solution space to a data generating process, where these notions differ depending on user-specified geometries and foundational belief systems. Motivated by this framework, we propose a generalisation to Bayes linear approaches that enables fast and principled inferences that obey the coherence requirements implied by domain restrictions on random quantities. We demonstrate the efficacy of generalised Bayes linear inference on a number of examples, including monotonic regression and inference for spatial counts. This paper is accompanied by an \texttt{R} package available at \url{github.com/astfalckl/bayeslinear}.
\end{abstract}

\noindent%
{\it Keywords:} generalised Bayesian inference, Bayes linear, monotonic Gaussian processes
\vfill

\newpage
\spacingset{\spacing}

\section{Introduction} \label{sec:intro}

For models with large numbers of parameters and large volumes of data the need to sample posterior distributions for exact Bayesian inference has, for many, rendered the perception of traditional Bayesian approaches as too slow for practical use \citep{knoblauch2022optimization}. However, Bayes provides a model for learning that is foundationally coherent \citep{goldstein2013observables, williamson2015posterior} and represents the gold standard for providing uncertainty quantification in machine learning algorithms \citep{murphy2012probabilistic, blundell2015weight,gal2016dropout,wenzel2020good,matsubara2021ridgelet,tran2022all,wang2023data}. For these reasons there has been a great deal of recent interest in \textit{Bayes-as-optimisation} approaches, seeking to provide Bayesian inference via, in principle, faster optimisation methods rather than slower sampling approaches. 

Motivated by computational expediency, variational approaches to Bayesian inference have been studied for many decades \citep{blei2017variational, jordan1999introduction}. These approaches typically make simplifying approximations to the form of the posterior, often amounting to posterior independence of the model parameters. These approximations enable the posterior of each parameter to be expressed in terms of the variational distribution and then optimised for efficient approximate inference. More recently, the exact posterior was shown to be the solution to an optimisation problem involving minimising the expected loss of a given loss function \citep{bissiri2016general}. This opened an area of research termed Generalised Bayesian Inference (GBI) wherein alternative loss functions were explored to allow more robust inferences in certain situations \citep{bissiri2016general,jewson2018principles,knoblauch2022optimization,jewson2023stability}. We expand on this theory more thoroughly in Section \ref{sec:GBIandOpt}.

Within the GBI literature, Bayes linear methods \citep{goldstein2007bayes} are often cited as an example of a generalised approach to Bayesian inference, though their interpretation nor how they fit within GBI is never made clear. Certainly, Bayes linear methods are a class of \textit{Bayes-as-optimisation} approaches, but their relation to modern theory has not been clearly established, with more recent accounts deferring to a short explanation offering how to compute a Bayes linear update, before moving on to the focus of their application and any specific innovations therein \citep{vernon2022bayesian, xu2021local, jackson2023efficient}. In fact, in order to establish Bayes linear approaches as a form of GBI, we must appeal to the underlying geometry of the Bayes linear paradigm. In this paper, we tackle this comparison with the aim of attracting more attention to Bayes linear ideas within the GBI community.

In addressing the connection between GBI and Bayes linear methods, we take inspiration from the GBI literature and propose a generalisation to Bayes linear approaches that allows for fast inference on restricted classes of quantities. We term this methodology \textit{generalised Bayes linear inference} and demonstrate its efficacy on a number of examples, including monotonic emulation and inference for spatial counts. The paper is structured as follows. Section \ref{sec:BL} presents a traditional account of Bayes linear statistics. Section \ref{sec:GBIandOpt} describes GBI and proposes a wider definition of a generalised Bayesian inference to encapsulate Bayes linear methods and other recent geometric approaches to Bayesian inference. Section \ref{sec::geometric} frames Bayes linear inference as a principled optimisation problem within a specific measure-free geometry and as a specific example of a method fitting within our wider definition of GBI. Section \ref{sec:gbli} generalises Bayes linear inference to problems with restricted solution spaces and Section \ref{sec:applications} presents a collection of applications of our method with comparison to standard approaches. We conclude in Section \ref{sec:discuss}. This work is accompanied by an \texttt{R} package available at \url{github.com/astfalckl/bayeslinear}.

\section{Bayes linear statistics}\label{sec:BL}

A traditional exposition for Bayes linear methods involves learning about a random quantity $\bX \in \mathbb{R}^n$, using a collection of data $\bD = (D_1, \dots, D_m)$, where $D_i$ is the $i$th observed datum. In his seminal work, \cite{de2017theory} argued that expectation should be the primitive quantity used to express belief and derived his theory of probability through expectation. De Finetti laid the ground work for partial prior specification, implying that one could specify the means, variances and covariances over $\bX$ and $\bD$ without reference to an underlying probability measure. Bayes linear methods, first published in \cite{goldstein1981revising} and most notably in \cite{goldstein2007bayes}, endow this theory with the required machinery for inference and belief revision. Equipped with $\E[\bX]$, $\var[\bX]$, $\E[\bD]$, $\var[\bD]$ and $\cov[\bX, \bD]$, beliefs about $\bX$ can be updated by $\bD$ using the Bayes linear updating equations:
\begin{linenomath}
\begin{align}
     \mathbb{E}_\bD[\bX] &= \mathbb{E}[\bX] + \cov[\bX,\bD] \var[\bD]^\dagger(\bD - \mathbb{E}[\bD]),\label{eqn:adj_exp} \\
      \var_\bD[\bX] &= \var[\bX] - \cov[\bX,\bD] \var[\bD]^\dagger \cov[\bD,\bX],\label{eqn:adj_var}
\end{align}
\end{linenomath}
where $\var[\bD]^\dagger$ is any pseudo-inverse of $\var[\bD]$, most commonly the Moore-Penrose inverse.
The updated quantities, $\mathbb{E}_\bD[\bX]$ and $\var_\bD[\bX]$ are known as the \textit{adjusted expectation} and \textit{adjusted variance} respectively. The adjusted expectation, $\E_{\bD}[\bX]$, is the solution to the optimisation problem that seeks to find the affine combinations of $\bD$, $\bh + \bH\bD$, that minimise expected squared-error loss, $\E[(\bX - (\bh + \bH\bD))^2]$. When we observe some $\bd$, we can evaluate the random quantity in \eqref{eqn:adj_exp}, known as the observed adjusted expectation, $\mathbb{E}_\bd[\bX]$. As we show in Theorem~\ref{the:bl}, Equations~\eqref{eqn:adj_exp} and \eqref{eqn:adj_var} are identical to the posterior moments of a probabilistic Bayesian analysis for a broad class of problems.

\begin{theorem} \label{the:bl}
    In a probabilistic Bayesian analysis, if the posterior expectation, $\mathbb{E}[\bX \mid \bD]$, is affine in $\bD$ such that $\mathbb{E}[\bX \mid \bD] = \bA\bD + \bB$, then the posterior expectation, $\mathbb{E}[\bX \mid \bD]$, and variance, $\mathrm{var}[\bX \mid \bD]$ have the functional forms of \eqref{eqn:adj_exp} and \eqref{eqn:adj_var}, respectively.
\end{theorem}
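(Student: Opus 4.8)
The plan is to determine the coefficient matrices $\bA$ and $\bB$ from the supplied second-order moments using the tower property, and then to recover the adjusted variance via the law of total variance. First I would pin down $\bB$ by taking an unconditional expectation of the affine relation: since $\E[\bX] = \E[\E[\bX \mid \bD]] = \bA\E[\bD] + \bB$, we obtain $\bB = \E[\bX] - \bA\E[\bD]$.

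Next I would identify $\bA$ by computing the cross-covariance. Writing $\cov[\bX,\bD] = \E[\bX\bD^\T] - \E[\bX]\E[\bD]^\T$ and applying the tower property to $\E[\bX\bD^\T] = \E[\E[\bX \mid \bD]\bD^\T] = \E[(\bA\bD + \bB)\bD^\T]$, then substituting the expression for $\bB$ and using $\E[\bD\bD^\T] = \var[\bD] + \E[\bD]\E[\bD]^\T$, the mean terms cancel and we are left with $\cov[\bX,\bD] = \bA\var[\bD]$. Taking $\bA = \cov[\bX,\bD]\var[\bD]^\dagger$ and substituting $\bA$ and $\bB$ back into $\bA\bD + \bB$ reproduces the adjusted expectation \eqref{eqn:adj_exp} directly.

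For the variance I would invoke the law of total variance, $\var[\bX] = \E[\var[\bX \mid \bD]] + \var[\E[\bX \mid \bD]]$. The second term is $\var[\bA\bD + \bB] = \bA\var[\bD]\bA^\T$, which on substituting $\bA = \cov[\bX,\bD]\var[\bD]^\dagger$ and using the symmetry of the Moore-Penrose inverse together with the identity $\var[\bD]^\dagger\var[\bD]\var[\bD]^\dagger = \var[\bD]^\dagger$ collapses to $\cov[\bX,\bD]\var[\bD]^\dagger\cov[\bD,\bX]$. Rearranging yields $\E[\var[\bX \mid \bD]] = \var[\bX] - \cov[\bX,\bD]\var[\bD]^\dagger\cov[\bD,\bX]$, which is precisely the functional form \eqref{eqn:adj_var}.

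The main obstacle I anticipate is handling the pseudo-inverse when $\var[\bD]$ is singular. The step $\bA = \cov[\bX,\bD]\var[\bD]^\dagger$ only solves $\cov[\bX,\bD] = \bA\var[\bD]$ if the rows of $\cov[\bX,\bD]$ lie in the row space of $\var[\bD]$; this holds because every column of $\cov[\bD,\bX]$ lies in the column space of $\var[\bD]$, a standard property of jointly distributed random vectors, so I would verify this containment explicitly rather than assume invertibility. I would also flag the conceptual point that \eqref{eqn:adj_var} matches the prior-expected posterior variance $\E[\var[\bX \mid \bD]]$ rather than the $\bD$-dependent $\var[\bX \mid \bD]$ itself; the two coincide exactly only when the posterior variance does not depend on $\bD$, which is the relevant situation for the conjugate analyses motivating the theorem.
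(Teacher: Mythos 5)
Your derivation of $\bA$ and $\bB$ is exactly the paper's argument: pin down $\bB$ by the tower property, then identify $\bA$ from the cross-moment $\E[\bX\bD^\T]$, giving $\bA = \cov[\bX,\bD]\var[\bD]^\dagger$ and hence \eqref{eqn:adj_exp}. For the variance you take a mildly different route: the paper substitutes the affine posterior mean into the outer product $\E[(\bX - \E_\bD[\bX])(\bX - \E_\bD[\bX])^\T]$ and expands directly, whereas you decompose via the law of total variance and isolate $\E[\var[\bX \mid \bD]]$. The two computations are algebraically equivalent, but your framing makes explicit something the paper's one-line ``follows similarly'' glosses over: under the affine-mean hypothesis alone, what one actually recovers in the form \eqref{eqn:adj_var} is the prior expectation of the posterior variance, which coincides with $\var[\bX\mid\bD]$ itself only when the posterior variance is free of $\bD$ (as in the conjugate exponential-family cases the paper invokes in Remark~2). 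This caveat, together with your check that the columns of $\cov[\bD,\bX]$ lie in the column space of $\var[\bD]$ so that the pseudo-inverse genuinely solves $\bA\var[\bD]=\cov[\bX,\bD]$, tightens two points the paper leaves implicit; the paper's direct expansion buys nothing extra beyond matching the derivation of the adjusted variance already given in its Appendix~A.
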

\begin{proof}
    Proof is provided in Appendix~B.
\end{proof}

\begin{remark} \label{rem:bl}
    The Bayes linear update is often considered to only be analogous to fully Bayesian specifications for Gaussian distributed data due to the well-known similarities between \eqref{eqn:adj_exp} and \eqref{eqn:adj_var} and the posterior distribution calculated assuming a Gaussian likelihood and prior. As noted in \cite{ericson1969note}, the class of likelihood and prior distribution pairings for which \eqref{eqn:adj_exp} describes the posterior mean is much larger than just the Gaussian family. Indeed, the condition in Theorem~\ref{the:bl}, that the posterior is linear in $\bD$, holds for the exponential family of distributions with conjugate prior, also further discussed in Appendix~B. Noting this should broaden the acceptance of using Bayes linear inference for more varied data, including that which exhibits logistic, count, and skewed behaviour.
\end{remark}
 
In order to demonstrate the connections between modern Bayesian approaches and Bayes linear methods, we now describe the core ideas of GBI and then give a geometric exposition for the Bayes linear paradigm that draws the relevant connections to GBI approaches.

\section{Generalised Bayesian Inference}\label{sec:GBIandOpt}

\cite{bissiri2016general} define GBI as solving 
\begin{equation} \label{eqn:GBI}
    \textit{q}^{*}(\boldsymbol{\theta}) = \underset{q\in\Pi}{\mathrm{argmin}} \Bigg\{\mathbb{E}_{\textit{q}(\boldsymbol{\theta})} \Bigg[\sum_{i=1}^{n}\ell(\boldsymbol{\theta},x_i)\Bigg] + \mathrm{KLD}(\textit{q}\parallel\pi)\Bigg\}, 
\end{equation} 
where $\textit{q}^{*}(\boldsymbol{\theta})$ is a probability measure over an unknown parameter $\boldsymbol{\theta}$, the $x_i$ are the observed data, KLD denotes the Kullback-Liebler divergence, $\pi$ is a prior measure on $\boldsymbol{\theta}$ and $\Pi$ is the set of all probability distributions with density $q(\boldsymbol{\theta})$. Note that here, and in the remainder of this section, we alter the nomenclature of the unknown quantity on which we infer from $\bX$ to $\boldsymbol{\theta}$ and the observed data from $D_i$ to $x_i$. We do so to align with the GBI literature that we review. The first term in \eqref{eqn:GBI} is a finite-sample approximation to the expected loss, 
\begin{equation*}
  \E[\ell(\boldsymbol{\theta},x)] =  \int\int \ell(\boldsymbol{\theta},x) \; \mathrm{d}F_0(x)q(\mathrm{d}\boldsymbol{\theta}), 
\end{equation*}
where $F_0(x)$ is the data generating process. Parameter inference takes place in the same geometric space associated with traditional Bayesian analysis, $\mathcal{L}_{2}(\Theta,\mu)$, with weighted integral inner product with respect to measure $\mu$, $\langle f(\boldsymbol{\theta}), g(\boldsymbol{\theta}) \rangle = \int_\Theta f(\boldsymbol{\theta})g(\boldsymbol{\theta})\; \mu(\mathrm{d} \boldsymbol{\theta})$.

In choosing $\ell$ to be the negative log-likelihood, \eqref{eqn:GBI} recovers the traditional Bayesian posterior and defines the quantity of interest $\boldsymbol{\theta}$ to be the parameter minimising the KLD between $F_0(x)$ and the model parametrised by $\boldsymbol{\theta}$, $f(x;\boldsymbol{\theta})$. \cite{bissiri2016general}, \cite{jewson2018principles} and \cite{knoblauch2022optimization} produce \textit{generalised} updates through the derivation and use of alternative scoring rules to motivate loss functions which target the minimisation of more robust divergences (relative to the KLD) between $F_0(x)$ and $f(x;\boldsymbol{\theta})$: 
\begin{linenomath}
\begin{equation*}
    \text{D}(F_0||f) = \E_{F_0}[S(x,F_0)] - \E_{F_0}[S(x,F)], 
\end{equation*}
\end{linenomath} 
where $S$ is user-defined scoring rule. For example, see \cite{jewson2023stability} for a recent study of the properties of generalised Bayesian updates using the $\beta$-divergence loss function. \cite{knoblauch2022optimization} generalise further by taking the second term within the optimisation in \eqref{eqn:GBI} to be general divergence $D(\cdot||\cdot)$ and restricting the solution space to families of tractable distributions $\mathcal{Q}$. Updates whereby the solution space is restricted are generally referred to as Generalised Variational Inference (GVI). For examples of recent discussion and extensions of these ideas see \cite{pacchiardi2021generalized,wild2022generalized,husain2022adversarial,matsubara2022robust,matsubara2023generalized,jewson2023stability}.

GBI, as defined above, recovers a host of different variations of Bayesian methodology; however, it does not fully encapsulate the available methodology by which one can produce optimal posterior beliefs that do not necessarily coincide with the traditional measure-theoretic Bayesian posterior. There are approaches to Bayesian inference via optimisation such as Bayes linear method or the method of \cite{de2019geometry} (given below) that don't fit so readily within the framework. We believe a further geometrically motivated generalisation covers a much wider class of methods and suggests avenues for further innovation. It also highlights the similarities and differences between GBI, GVI and Bayes linear methods, and we advocate for these methodologies to be seen within our definition.
\begin{definition}
    The three properties of a generalised Bayesian inference are
    \begin{itemize}[leftmargin=8em]
    \item[\textbf{Property 1}] The unknown quantity $\boldsymbol{\theta}$ and the data $x$, are defined on a separable completely metrisable topological space, $\mathcal{G}$, i.e. $\boldsymbol{\theta}$ and $x$ form a Polish space \citep[see][for a rigorous treatment of Polish spaces]{kechris2012classical}.
    \item[\textbf{Property 2}] The space $\mathcal{G}$ is equipped with a statistical divergence $d$, such that for $a, b \in \mathcal{G}$, $d(a, b) \geq 0$ and $d(a, b) = 0$ if and only if $a = b$.
    \item[\textbf{Property 3}] 
    An optimisation with respect to $\boldsymbol{\theta}$, within solution space $\boldsymbol{\theta} \in \mathcal{C} \subset \mathcal{G}$, for the belief representation that is closest to the data generating process with respect to the statistical divergence $d$.
\end{itemize}
\end{definition}
In a standard probabilistic Bayesian analysis, $\mathcal{G}$ is further equipped with a Borel $\sigma$-algebra $\mathcal{B}_\sigma$ and $\sigma$-finite dominating measure $\mu$ so that $(\mathcal{G}, \mathcal{B}_\sigma, \mu)$ forms a $\sigma$-finite measure space. As we will explore in Section~\ref{sec::geometric}, this is not a necessary assumption under which to conduct inference. Further, additional restrictions on the statistical divergence $d$ may impose additional structure in $\mathcal{G}$. If for all $a \in \mathcal{G}$, the Taylor expansion of $d(a, a + \mathrm{d}a)$ has a positive definite quadratic form, then $d$ defines a Riemann metric $g$ on $\mathcal{G}$ so that $(\mathcal{G}, g)$ forms a Riemannian manifold. For example, this is the case when $d$ is the Kullback-Leibler divergence, but not the total variation distance. If $d(a, b) = d(b, a)$ and satisfies the triangle inequality $d(a, c) <= d(a, b) + d(b, c)$ then $(\mathcal{G}, d)$ is a complete metric space with metric $d$. If the metric is calculated from a norm, so that $d(a + h, b + h) = d(a, b)$ and $d(ha, hb) = |h|d(a, b)$, then $(\mathcal{G}, d)$ is a Banach space, as is the case in Bayes linear statistics.

We also note our use of the term \textit{data generating process} in Property~3 is not meant to signify the requirement for the existence of a random process that generates the data. Even within GBI, there are interpretations of the data generating process, $F_0(x)$, as a truly held subjective probability distribution that you are unable to specify due to time and resource constraints \citep{jewson2018principles, bird2023meaning}. We use the term as it is familiar to most researchers, yet its meaning will depend on the choices made for Properties 1 and 2. For Bayes linear methods, Properties 1 and 2 lead to measure-free inference, so the data generating process is not required to be a probability distribution.

Definition~1 clearly captures the framework associated with GBI and GVI, where the geometry, $\mathcal{G} = \mathcal{L}_{2}(\Theta,\mu)$, Property 2 tends to be met via a chosen divergence and the optimisation in Property 3 is, in GBI, either with respect to the family of posterior distributions $\Pi$ imposed by the likelihood and prior,   or in GVI some pre-specified family of distributions $\mathcal{Q}$. An example of a recent approach that does not fit within previous definitions of GBI is \cite{de2019geometry}. Here $\mathcal{G}$ is taken to be $\mathcal{L}_{2}(\Theta, \mu)$ with uniform measure $\mu(\boldsymbol{\theta})$. The inner product is used to define the compatibility measure 
\begin{equation*}
    \kappa_{g,h} = \frac{\langle g,h \rangle}{\|g\|\|h\|}\;\;\;g,h \in \mathcal{L}_{2}(\Theta),
\end{equation*}
making their notion of closeness the level of compatibility captured by $\kappa_{g,h}$, imposing a distance $d(g, h) = 1 - \kappa_{g,h}$ that satisfies Property~2. Compatibility is maximised between functions $g$ and $h$, where these elements can be taken to be different combinations of potential prior, posterior and likelihood functions. Updates of this kind, potentially involving optimisation of metrics such as that of compatibility in order to produce alternative posteriors fall comfortably within Definition~1.

Generally, $\mathcal{G}$ will be an inner product space and most probabilistic methods will represent closeness via some divergence; however, even in these settings this represents a particular choice. As an example, \cite{informationgeometry} argue that one can choose to represent the space of Gaussian distributions as a two-dimensional manifold with coordinate system ($\mu,\sigma$); this is topologically equivalent to the upper-half of a two-dimensional Euclidean space. One could therefore define closeness between two points on the manifold of Normal distributions as a distance in Euclidean space.

\section{The geometry of Bayes linear statistics}\label{sec::geometric}

\subsection{The belief structure}

We now give a formal geometric development of Bayes linear statistics in order to establish the theory as a generalised Bayesian inference, as per Definition~1. We revert back to the notation familiar in the Bayes linear literature and define $\bX$ as our random quantity and $\bD$ as the collection of the observed data. First, we establish the underlying geometry (Property 1). In general, random quantities are defined to live in a vector space, $X$, equipped with an inner product to form the Hilbert space, $\mathcal{X}$, with the standard assumption of Cauchy completeness in the corresponding Banach space (this is tantamount to assuming that the random quantities have finite variance). Rather than the $\mathcal{L}_2$ inner product with measure $\mu$, the Bayes linear paradigm uses an alternative inner product that enables expectation, rather than probability, to be the primitive quantity. Consider the random quantity, $\bX \in \mathbb{R}^n$, defined on the Hilbert space $\mathcal{X}$ endowed with the \textit{product inner product} $\langle \bX, \bY \rangle = \mathbb{E}[\bX^\T\bY]$. Assume, herein, all defined Hilbert spaces to be equipped with the product inner product. As before, denote by $D_i$ the $i$th datum and the collection $\bD = (D_1, \dots, D_m)$ as all of the data; further define $\mathcal{D}$ as the Hilbert space in which $\bD$ lives. In standard Bayesian inference a joint probability measure is required to be specified for $\bX$ and $\bD$; in Bayes linear statistics we are required to define the analogous joint inner product space $\mathcal{B} = \mathcal{X} \cup \mathcal{D} \cup \textcal{1}$, also known as the \textit{belief structure}. Here, $\textcal{1}$ denotes the linear subspace spanned by the unit constant $\one$ and is required to identify random variables that differ by a constant. Construction of $\mathcal{B}$ requires the specification of the totality of norm expressions between the elements of $\bX$, $\bD$ and $\textcal{1}$ and so requires specification of $\mathbb{E}[\bX]$, $\mathbb{E}[\bD]$, $\var[\bX]$, $\var[\bD]$, and $\cov[\bX, \bD]$. 

As soon as we have fully specified our belief structure, $\mathcal{B}$, we have adequate functionality to describe notions of distance and similarity (Property 2). For example, if we consider the individual elements of $\bX$ and $\bD$ to be zero-mean and form a basis in $\mathcal{B}$, we may describe the similarity of any linear combination of $\bX$ and $\bD$, $a^\T \bX$ and $b^\T \bD$ as $\langle a^\T \bX, b^\T \bD \rangle = a^\T \cov[\bX, \bD] b$. We may also extend this to non-random vectors in $\mathcal{B}$; for example, data $\bd \in \mathbb{R}^m$ observed in $\mathcal{D}$. If we represent $\bd$ via a basis representation in $\mathcal{D}$, $\bd = b^\T \bD$, we can calculate the norm, as above, as $\norm{\bd}^2 = b^\T \var[\bD] b$. Observing $\bd$ gives rise to a linear functional that can be used in Riesz's representation theorem to identify the coordinate vector $b = \var[\bD]^\dagger \bd$ \citep[see][Chapter 4]{goldstein2007bayes}. Consequently, the squared distance between any two observed elements, $\bd_1, \bd_2$ in $\mathcal{D}$ can be expressed as the Mahlanobis distance $\|\bd_1 - \bd_2\|^2 = (\bd_1 - \bd_2)^\T\var[\bD]^{\dagger}(\bd_1-\bd_2)$.

\subsection{Adjusting belief structures}\label{sec::adj_belief_struc}

Bayes linear inference projects the random quantity of interest into subspaces of $\mathcal{B}$ defined by the observed quantities. Projection is a specific optimisation with the chosen subspaces representing the restricted solution space (Property 3). Consider an observable $\bF \in \mathbb{R}^q$. The adjusted expectation, $\mathbb{E}_\bF[\bX]$, of $\bX$ by $\bF$, is the orthogonal projection of $\bX$ onto $\{\bH \bF \mid \bH \in \mathbb{R}^{n \times q}\}$ a linear subspace of $\mathcal{B}$, that minimises $\|\bX - \mathbb{E}_\bF[\bX]\|$. The linear subspace defined by the quantity $\bX - \mathbb{E}_\bF[\bX]$ describes the aspects of $\bX$ unresolved by $\bF$. After the adjustment of $\bX$ by $\bF$, the inner product remains unchanged (the product inner product) but with revised random quantities: $\langle \bX, \bY \rangle_\bF = \langle \bX - \mathbb{E}_\bF[\bX], \bX - \mathbb{E}_\bF[\bY] \rangle = \mathbb{E}\left[\left(\bX - \mathbb{E}_\bF[\bX]\right)^\T\left(\bY - \mathbb{E}_\bF[\bY]\right)\right]$. The simplest adjustment to consider is the adjustment of a random quantity $\bX$ by the unit constant, $\bm{1}$. Here, $\mathbb{E}_{\bm{1}}[\bX] = \mathbb{E}[\bX]$. After this adjustment, the unresolved aspects of $\bX$ are $\bX - \mathbb{E}[\bX]$ and the inner product $\langle \bX, \bY \rangle_{\bm{1}} = \mathbb{E}\left[\left(\bX - \mathbb{E}[\bX]\right)^\T\left(\bY - \mathbb{E}[\bY]\right)\right]$ is equivalent to the covariance inner product. Consequently, distance in the inner product space $\langle \cdot, \cdot \rangle_{\bm{1}}$ now has a direct interpretation with uncertainty. If we adjust the belief space $\mathcal{B}$ by the unit constant, then the outer product that defines the space is wholly represented by the covariance matrices $\var[\bX]$, $\var[\bD]$, and $\cov[\bX, \bD]$. 

The adjusted expectation used in practice projects $\bX$ onto the joint linear subspace represented by the constant $\bm{1}$ and the observed data $\bD$ so that 
\begin{equation} \label{eqn:adj_exp_def}
    \mathbb{E}_{(\bm{1}, \bD)}[\bX] = \begin{pmatrix}\bh, \bH \end{pmatrix} \begin{pmatrix}\bm{1} \\ \bD \end{pmatrix} = \bh + \bH \bD \text{ for } \bh \in \mathbb{R}^{n \times 1} \text{ and } \bH \in \mathbb{R}^{n \times m}.
\end{equation}
Note that $\mathbb{E}_{(\bm{1}, \bD)}[\bX] = \mathbb{E}_{(\bD - \mathbb{E}[\bD])}[\bX - \mathbb{E}[\bX]]$. As, to identify a unique solution, we must always project onto $\textcal{1}$, it is traditionally suppressed in the notation so that both are written $\mathbb{E}_{\bD}[\bX]$ and the inner product is taken as the covariance inner product. Minimising the distance $\norm{\bX - \mathbb{E}_\bD[\bX]}$ for $\mathbb{E}_\bD[\bX]$ as in (\ref{eqn:adj_exp_def}) yields the adjusted expectation as given in \eqref{eqn:adj_exp}. Our remaining uncertainty in $\bX$, after adjusting by $\bD$, is represented by the linear subspace $\bX - \mathbb{E}_\bD[\bX]$. The adjusted variance $\var_\bD[\bX]$ represents the totality of the norm expressions in $\bX - \mathbb{E}_\bD[\bX]$, and is defined by the outer product $\mathbb{E}\left[\left(\bX - \mathbb{E}_\bD[\bX]\right)\left(\bX - \mathbb{E}_\bD[\bX]\right)^\T\right]$, which can be readily computed via \eqref{eqn:adj_var}. Note that $\var_\bD[\bX]$ is unaffected by the actual observation of $\bD$ and so is in effect a prior inference. Derivations of \eqref{eqn:adj_exp} and \eqref{eqn:adj_var} are given in Appendix~A; these are alternative derivations to the often cited derivations in \cite{goldstein2007bayes} and provide an alternative insight to inference as orthogonal projections.

Note that, for the Bayes linear paradigm, the `data generating process' is just $\bX$, the value of the random quantity that coincides with data, $\bD$. As we only specify $\cov[\bX, \bD]$, there is no model for the data conditional on the random quantity and it would be imprecise to say that $\bX$ \textit{generates} $\bD$, though there may be examples where causality is assumed within a Bayes linear analysis. We use the term `data generating process' only to clarify the parallels with more mainstream statistical approaches.

\subsection{Bayes linear as GBI}

Considering Bayes linear methods as a form of GBI offers avenues to enrich both methodologies. Perhaps what makes Bayes linear approaches seem so unusual is the alternative choice of the underlying geometry (Property 1, i.e. the product inner product). In Bayes linear statistics, this choice of inner product naturally facilitates a notion of distance (Property 2), whereas divergences and their specification are the focus of much of the current GBI literature. \cite{klebanov2021linear} propose to extend Bayes linear methods via a fully probabilistic study of the `linear conditional expectation' in infinite-dimensional Hilbert space. However, this changes the geometry (Property 1), and defines a different notion of closeness and is thus perhaps best viewed as another form of GBI, rather than an extension of Bayes linear methods. 

Restricting the solution space to some $\mathcal{C}$ (Property 3) has received significantly less attention in Bayes linear inference and will be of particular interest for the remainder of this article. Though projection onto the linear subspace spanned by $\{\one, \bD\}$ represents a particular restriction for Bayes linear methods, those subspaces are not constrained to particular subsets in the original Hilbert space in which the random quantity is defined. For example, if our random quantity is a strictly non-negative quantity, there is nothing to ensure that the adjusted expectation respects this. Such restrictions can be handled trivially within the geometries more frequently used for GBI, either by setting prior measures to zero via Property 2, or restricting the solution space to measures that are zero wherever restrictions are required via Property 3. In the following section we develop a generalised Bayes linear inference that enables inference to take place within subsets in Hilbert spaces. 

\section{Generalised Bayes linear inference}\label{sec:gbli}

\subsection{The generalised adjusted expectation}

In many statistical inference problems we want to restrict the solution space to ensure our adjusted expectation $\mathbb{E}_\bD[\bX] \in \mathcal{C} \subseteq \mathbb{R}^n$. Where the adjusted expectation is subject to such a constraint we will use the notation $\mathbb{E}_\bD^\mathcal{C}[\bX]$. For example, we may want to restrict our inference to the positive cone (random quantities are all positive), the isotonic cone (random quantities are monotonically ordered), or the positive semi-definite cone (random quantities form a valid variance matrix). Note that $\mathbb{E}_\bD^\mathcal{C}[\bX]$ is not necessarily affine in $\bD$; we arrive at this fact via example. If we assume $\mathbb{E}_\bD^\mathcal{C}[\bX] = \bh_0 + \bH_0 \bD$, then, having observed $\bd$, computed $\mathbb{E}_\bd[\bX]$ and found that $\mathbb{E}_\bd[\bX] \notin \mathcal{C}$ we could reason that we need to solve for some other $\bh$ and $\bH$ than what is given by Bayes linear theory. This decision to re-calculate $\bh$ and $\bH$ is dependent on what value we observed from $\bD$. Consequently, $\bh$ and $\bH$ are functions of $\bD$ and so the assumption of linearity does not hold. 

To allow for a restricted solution space, $\mathbb{E}_\bD^\mathcal{C}[\bX] \in \mathcal{C}$, we augment the traditional Bayes linear theory. Define $\mathcal{X}_{\bD}$ as the Hilbert space, still endowed with the product inner product but adjusted by $\bD$, so that the inner product is 
\begin{equation*}
    \langle \bX, \bY \rangle_\bD = \langle \bX - \mathbb{E}_\bD[\bX], \bY - \mathbb{E}_\bD[\bY] \rangle.
\end{equation*} 
This is still the same original space and geometry as in Property 1, but with the adjusted belief space given $\bD$ described in Section \ref{sec::adj_belief_struc}. The observed adjusted expectation, $\mathbb{E}_\bd[\bX] \in \mathcal{X}_{\bD}$, does not necessarily fall in $\mathcal{C} \subseteq \mathcal{X}_\bD$. We propose a method that respects the original belief specification as being truly held, by using it, and the adjustment, to define the inner product space. To amend the inference accordingly, we then restrict the solution space to $\mathcal{C}$. We define the solution $\mathbb{E}_\bd^\mathcal{C}[\bX]$ as the orthogonal projection of $\mathbb{E}_\bd[\bX]$ into $\mathcal{C}$, found by computing 
    \begin{equation} \label{eqn:objective}
      \mathbb{E}_\bd^\mathcal{C}[\bX] = \underset{\bq \in \mathcal{C}}{\arg \min} \| \mathbb{E}_\bd[\bX] - \bq\|.
    \end{equation}
As discussed in Section \ref{sec::geometric}, the norm in \eqref{eqn:objective} is computed as
    \begin{equation} \label{eqn:norm}
        \| \mathbb{E}_\bd[\bX] - \bq\|^2 = (\mathbb{E}_\bd[\bX] - \bq)^\T \var_\bD[\bX]^{-1} (\mathbb{E}_\bd[\bX] - \bq),
    \end{equation}
noting that the inner product remains defined by the adjusted belief specification. As noted above, imposing a solution space and so altering Property 3, within our specific geometry, is the most natural way in which we can exert a comparative degree of control over our solution space relative to other GBI methodologies.

\begin{remark}
    We define this theory for general subsets $\mathcal{C}$. If we assume $\mathcal{C}$ to be convex, the Hilbert projection theorem states that there exists a unique vector $\bq$ for which \eqref{eqn:objective} is minimised, over all vectors $\bq \in \mathcal{X}_\bD$. Further, as \eqref{eqn:norm} is a convex function, as long as $\mathcal{C}$ is not the empty set, a solution is guaranteed. We note that conic projections are a sub-class of convex optimisation problems, and so this guarantee of existence and uniqueness also holds for projections into conic solution spaces.
\end{remark}

If $\mathbb{E}_\bd[\bX] \notin \mathcal{C}$, retaining the adjusted expectation as our own expectation having observed the data violates coherence. Whether the source of the incoherence is our original belief specification, the restriction of the solution space to $\mathrm{span}\{\one, \bD\}$, or the combination of these two, revisiting either or both to address the problem is highly challenging in practice. Even the most experienced Bayes linear practitioners would struggle to make foundationally meaningful adjustments to $\cov\left[\bX, \bD\right]$, $\var\left[\bX\right]$ and $\var\left[\bD\right]$, even for univariate $\bX$, that would ensure a coherent adjusted expectation. The restriction to $\mathcal{C}$ is likely the most firmly held of the prior statements an expert or practitioner might make. However, the partial prior specification required for Bayes linear methods can only loosely reflect it through ensuring that $\mathbb{E}[\bX]\in \mathcal{C}$. It can still be the case that the second order prior specification is also believed and that there are possible data that lead to $\mathbb{E}_\bd[\bX] \notin \mathcal{C}$ (see illustrative example in Section \ref{sec:example}). In this case, the source of the incoherence is the restriction of the solution space to $\mathrm{span}\{\one, \bD\}$,  and projecting into explicitly non-linear (in \bD) solution spaces would require further or different prior specifications as well as a different theory.
    
An important motivation for retaining the adjusted beliefs for determining the inner product, is to ensure that the sequential nature of Bayesian updating is retained. Namely, the adjustment by $\bD_1$ and then by $\bD_2$ should be equivalent to the adjustment by $\bD_1\cup \bD_2$. As the adjustment happens in the usual fashion, amending the inner product by $\mathbb{E}_\bD[\bX]$, this property is guaranteed by the usual Bayes linear theory \citep[see][Chapter 5]{goldstein2007bayes}. Given observations, the inference, or prediction, we report would then be restricted to $\mathcal{C}$ via (\ref{eqn:objective}). The same happens in GBI approaches, particularly those with restricted solution spaces such as variational inference. For example, the variational approximation is not found (say using $\bD_1$) and then itself updated by further data (say $\bD_2$). Instead, the posterior given $\bD_1$ is approximated and, given more data, it is then approximated again using $\bD_1\cup\bD_2$.

\subsection{The generalised adjusted variance} \label{sec:gen_adj_var}

The adjusted variance is defined from the outer product $\mathbb{E}[(\bX - \mathbb{E}_\bD[\bX]) (\bX - \mathbb{E}_\bD[\bX])^\T]$ assuming the affine representation $\mathbb{E}_\bD[\bX] = \bh_0 + \bH_0 \bD$. As when $\mathbb{E}_\bd^\mathcal{C}[\bX] \neq \mathbb{E}_\bd[\bX]$, our adjusted beliefs are not affine in $\bD$ and this definition of adjusted variance does not extend to our generalised setting. Rather, we propose an alternative definition that suits some additional coherency arguments. To aid our developments, we represent the vector $(\mathbb{E}_\bd^\mathcal{C}[\bX] - \mathbb{E}_\bd[\bX]) \in \mathcal{X}_\bD$ in an orthonormal space $\mathcal{Z}$, so that $\var[\bZ] = \bI$, by defining 
\begin{equation}
  \bz = (z_1, \dots, z_n)^\T = \bL^\dagger (\mathbb{E}_\bd^\mathcal{C}[\bX] - \mathbb{E}_\bd[\bX]),
\end{equation}
where $\var_\bD[\bX] = \bL \bL^\T$ is any square-root decomposition of $\var_\bD[\bX]$. We name the quantity, $\bz\in \mathcal{Z}$, the \textit{constraint discrepancy}. When $\bL = \bQ \sqrt{\blambda}$, where $\blambda$ is a diagonal matrix of the ordered eigenvalues of $\var_\bD[\bX]$ and $\bQ$ stores the corresponding eigenvectors, we obtain a representation whereby the elements $(z_1, \dots, z_n)$ describe the standardised distance between $\mathbb{E}_\bd^\mathcal{C}[\bX]$ and $\mathbb{E}_\bd[\bX]$ in terms of the principal axes of $\var_\bD[\bX]$. We can use any square-root decomposition to project onto an orthonormal basis; however, using $\bL = \bQ \sqrt{\blambda}$ provides an appealing interpretation. 

We define the generalised adjusted variance, $\var_\bd^\mathcal{C}[\bX]$, as a rotation of $\bS$, where $\bS$ is a diagonal matrix with each entry, $\bS_{ii}$, defined by a function of the $i$th constraint discrepancy, $f(z_i)$, and $\var_\bd^\mathcal{C}[\bX] = \bL \bS \bL^\intercal$. We define three statements/constraints of coherency that restrict and motivate the choice of $f(z_i)$:
\begin{itemize}[leftmargin=10em]
    \item[\textbf{Constraint 1:}] The limit $\underset{|z_i| \rightarrow 0}{\text{lim}} \left( \bS_{ii} \right) = 1$.
    \item[\textbf{Constraint 2:}] The limit $\underset{|z_i| \rightarrow \infty}{\text{lim}} \left( \bS_{ii} \right) = 0$.
    \item[\textbf{Constraint 3:}] $f(\cdot)$ is a non-increasing function of $z_i.$
\end{itemize}
Constraint 1 ensures that our method generalises Bayes linear methods so that if $\mathbb{E}_\bd^\mathcal{C}[\bX] = \mathbb{E}_\bd[\bX]$, then $\var_\bd^\mathcal{C}[\bX] = \var_\bD[\bX]$, and so when the adjusted expectation coheres with $\mathbb{E}_{\bd}[\bX] \in \mathcal{C}$, we retain the standard belief revisions. Constraints 2 and 3 codify the notion that the further away the standard adjusted expectation is from $\mathcal{C}$, the more sure we ought to be that the random quantity, $\bX$, is the nearest point in $\mathcal{C}$ to $\mathbb{E}_{\bd}[\bX]$. To see this consider an adjusted expectation just outside of $\mathcal{C}$, but where $\var_{\bD}[\bX]$ implies that many points within $\mathcal{C}$ are credible values of $\bX$. In this case, the generalised adjusted variance should only be marginally smaller than the adjusted variance. As $\mathbb{E}_{\bd}[\bX]$ moves further away from $\mathcal{C}$ the subset of credible values according to the original $\var_{\bD}[\bX]$ reduces, and so $\var_{\bd}^\mathcal{C}[\bX]$ should reduce accordingly.

Specification of the function $\mathrm{f}(\cdot)$ that describes $\bS_{ii} = \mathrm{f}(z_i)$, is an additional belief specification. For instance, the rate of decay in $\mathrm{f}(\cdot)$ will be slower for random variables with heavier tails, that is, larger kurtosis. In the absence of a strong belief that can be translated to a choice of $\mathrm{f}(\cdot)$, we propose 
\begin{equation} \label{eqn:cantelli}
  f(z_i) =\bS_{ii} = \frac{1}{1 + z_i^2},
\end{equation}
motivated by Cantelli's inequality, the one-sided improvement to Chebyshev's inequality, that states an upper bound on the upper tail probability for random quantity $Z_i$ being larger than its observed value $z_i$. Whilst Bayes linear methods don't admit probabilistic belief specification, they regularly make use of such probabilistic inequalities to guide diagnostics in model selection \citep[see][]{goldstein2007bayes, craig1996bayes, salter2022efficient}.

\subsection{An illustrative 2D example} \label{sec:example}

We illustrate our developments in this section with a simple example that also serves to demonstrate how easy it is for standard Bayes Linear methods with reasonable prior specifications to provide incoherent results. We define $\bX$ and $\bD$ as unit-mean bivariate quantities with a belief structure defined by
\begin{equation*}
    \mathrm{var}[\bX] = \begin{bmatrix}
        0.54 & 0.09 \\ 0.09 & 0.54
    \end{bmatrix}, \quad
    \mathrm{var}[\bD] = \begin{bmatrix}
        1 & -0.2 \\ -0.2 & 1
    \end{bmatrix}, \quad
    \mathrm{cov}[\bX, \bD] = \begin{bmatrix}
        0.4 & -0.1 \\ -0.1 & -0.3
    \end{bmatrix}.
\end{equation*}
We further define the solution space $\mathcal{C} = \{\bX \in \mathbb{R}^2 \mid \bX \succeq 0\}$ as the non-negative cone defined from the non-negative orthant in $\mathcal{X}$. Given an observation $\bd = (3, 6.5)$, calculating the standard Bayes linear updates, as given in \eqref{eqn:adj_exp} and \eqref{eqn:adj_var}, yields observed adjusted expectation and adjusted variance
\begin{equation*}
    \mathbb{E}_\bd[\bX] = \begin{bmatrix}
        1.68  \\ -1.17
    \end{bmatrix}, \quad
    \mathrm{var}_\bD[\bX] = \begin{bmatrix}
        0.38 & 0.123 \\ 0.123 & 0.423
    \end{bmatrix}.
\end{equation*}
These results are shown in the left of Figure~\ref{fig:2d_simulation}. The observed adjusted expectation falls outside of $\mathcal{C}$ and so the generalised adjusted expectation, $\mathbb{E}_\bd^\mathcal{C}[\bX]$, and variance, $\mathrm{var}_\bd^\mathcal{C}[\bX]$, will differ from $\mathbb{E}_\bd[\bX]$ and $\mathrm{var}_\bD[\bX]$. We calculate $\mathbb{E}_\bd^\mathcal{C}[\bX]$ from \eqref{eqn:objective} and $\mathrm{var}_\bd^\mathcal{C}[\bX]$ as described in Section~\ref{sec:gen_adj_var} with the additional belief specification of variance decay following \eqref{eqn:cantelli}. This yields the generalised belief updates
\begin{equation*}
    \mathbb{E}_\bd^\mathcal{C}[\bX] = \begin{bmatrix}
        2.02 \\ 0
    \end{bmatrix}, \quad
    \mathrm{var}_\bd^\mathcal{C}[\bX] = \begin{bmatrix}
        0.106 & 0.0344 \\ 0.0344 & 0.0154
    \end{bmatrix},
\end{equation*}
also shown in the middle plot of Figure~\ref{fig:2d_simulation}. There are a number of interesting features to note. First, the generalised update does not simply set the negative values to zero, as in \cite{swallow1984monte, wilkinson1995bayes, williamson2012fast}. For this $\mathcal{C}$, setting negative values to zero implies the norm in \eqref{eqn:objective} is with respect to Euclidean distance, rather than respecting the geometry of $\mathcal{X}_\bD$. Second to note, is that the generalised variance is not simply a rescaling of $\mathrm{var}_\bD[\bX]$; the principal axes of $\mathrm{var}_\bD[\bX]$ and $\mathrm{var}_\bd^\mathcal{C}[\bX]$ differ. The reasons for this can be inferred from the right plot of Figure~\ref{fig:2d_simulation}. Here we show $\mathcal{C}$ and the standard and generalised adjusted beliefs rotated onto the standardised space $\mathcal{Z}$ via $\bL^{-1}$ where $\bL = \bQ \sqrt{\blambda}$, as above, and so $\mathrm{Z}_1$ and $\mathrm{Z}_2$ are defined in terms of the principal axes of $\mathrm{var}_\bD[\bX]$. In the space $\mathcal{Z}$, the generalised update is with respect to the Euclidean norm, as $\mathrm{var}[\bZ] = \bI$. In this example we predominantly adjust along $\mathrm{Z}_1$, resulting in a large reduction of variance along this axis, and adjust a relatively smaller amount along $\mathrm{Z}_2$, with a comparatively smaller adjustment in variance along $\mathrm{Z}_2$. As $\bS \neq c \bI$, for some constant $c$, the eigenvectors of $\mathrm{var}_\bd^\mathcal{C}[\bX] = \bL \bS \bL^\T$ will differ from $\mathrm{var}_\bD[\bX] = \bL \bL^\T$.

\begin{figure}[h!]
    \centering
    \includegraphics[width = \linewidth]{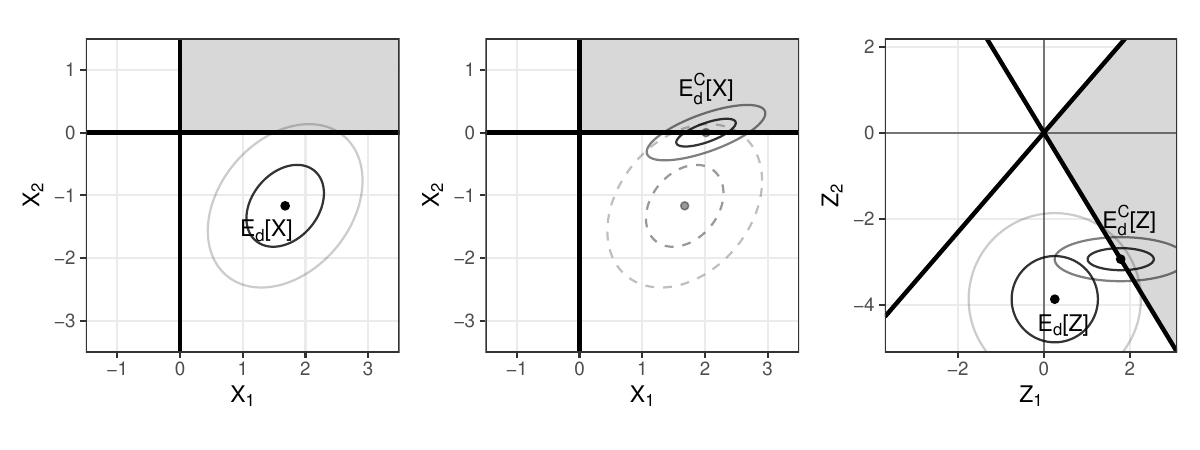}
    \caption{Left: the observed adjusted expectation $\mathbb{E}_\bd[\bX]$ and $\mathcal{C}$, the non-negative cone, shaded. Middle: the generalised adjusted expectation $\mathbb{E}_\bd^\mathcal{C}[\bX]$ as the projection of $\mathbb{E}_\bd[\bX]$ onto $\mathcal{C}$. Right: the rotation into the standardised space $\mathcal{Z}$; observed adjusted expectation and generalised adjusted expectation are represented by $\mathbb{E}_\bd[\bZ]$ and $\mathbb{E}_\bd^\mathcal{C}[\bZ]$. In all plots, corresponding uncertainties are represented by ellipses of one and two standard deviations.}
    \label{fig:2d_simulation}
\end{figure}

\section{Applications}\label{sec:applications}

The scope for application of this methodology are wide and varied. This is particularly so when $\mathcal{C}$ describes a closed convex set, and so a solution to \eqref{eqn:objective} may be guaranteed. We provide examples for two commonly encountered problems in statistical modelling: monotonic emulation and count regression. We conclude by providing some comments on other specifications for $\mathcal{C}$ that are expected to be of use. Before we continue, we reiterate that there are many bespoke existing solutions to each of the problem classes that we discuss; we review the relevant literature in context below. The novelty of our methodology is not that it provides any single methodological solution; but rather, it generalises to a very broad class of problems. Further, we note that this methodology is simple to implement and much quicker to compute than any of the considered simulation-based methodologies.

\subsection{Monotonic regression} \label{sec:monotone}

Consider a stochastic process, $\mathrm{X}(t)$, for $t \in \mathbb{R}^p$, and suppose we want to restrict $\mathrm{X}(t)$ to the space of monotone functions, $\mathcal{M}$, so that $\mathrm{X}(t) \leq \mathrm{X}(t')$ for $t \preceq t'$, and $t,t' \in \mathbb{R}^p$. Monotonic regression, particularly with Gaussian processes (GPs), has long received attention due to the need within many applied problems to incorporate physical knowledge within surrogates or emulators of computer models. In fact, monotonicity is just one important type of constraint we might want to build in, and our generalised Bayes linear methods provide the same elegant solution to all types of constraint. Existing solutions work reasonably well, but usually come at a significant computational cost. One approach, for instance taken by \cite{pepper2023probabilistic}, is to borrow the common projection onto a basis technique for dimension reduction from the Gaussian process literature \citep{higdon2008bayesian, williamson2012fast, salter2019uncertainty}, consider a basis of monotonic functions, project the data onto that and then fit surrogates to the coefficients. This can work well, however the class of functions is restricted by the basis and the coefficients from any basis projection are very frequently non-stationary and so not amenable to standard GP regression techniques. 

More principled approaches make use of the fixed sign of the derivative in a monotonic setting and assume differentiability of the true process. \cite{wang2016estimating} assume known derivatives and jointly update the GP with this information. Allowing for knowing the sign without knowing the derivatives and integrating them out is intractable and so either approximating distributions or constrained sequential Monte Carlo are used to generate samples \citep{riihimaki2010gaussian,golchi2015monotone}. Sampling methods come with significant computational overheads and there is no readily available software, meaning that practitioners with monotone problems face investing in laborious code development even to begin working with their data. \cite{lin2014bayesian} present a computationally expedient methodology that converts the GP into a monotonic stochastic process by projecting samples into the greatest convex minorant, where distance in the projection is measured with respect to the Euclidean norm. Though the resulting process is valid and monotonic, it is not clear that the resulting process formally respects any prior belief specification. 

To provide some more rigorous comparisons to existing methodology we repeat the simulation study in \cite{lin2014bayesian}, also studied in \cite{holmes2003generalized}, \cite{neelon2004bayesian}, and \cite{shively2009bayesian}. Data are generated from six mean functions, with Gaussian error and standard deviation $\sigma^2 = 1$. The functions are flat, $F_1(x) = 3$, sinusoidal, $F_2(x) = 0.32(x + \sin x)$, step, $F_3(x) = 3$ for $x \leq 8$ and $F_3(x) = 6$ for $x > 8$, linear, $F_4(x) = 0.3x$, exponential, $F_5(x) = 0.15 \exp (0.6x - 3)$, and logistic, $F_6(x) = 3/(1 + \exp(-2x + 10))$. They are evaluated at $n=100$ equidistant points on the interval $x \in [0,10]$, and we repeat each simulation 100 times. To repeat their study, the \cite{lin2014bayesian} method is applied to an initial GP, $f \sim \mathcal{GP}(\bm{0}, \bK)$, with $i$th $j$th elements of $\bK$ given by $\bK_{ij} = \eta \exp [-\gamma(x_i - x_j)^2]$, and $\{\sigma^2, \eta, \gamma\}$ estimated via their maximum a-posteriori values. For comparability we use the same parameters in our covariance function, $\mathrm{cov}[\mathrm{X}_i, \mathrm{X}_j]$, and keep a prior mean of $0$ for all $x$, hence the adjusted expectation and GP mean coincide. Note that a standard BL approach might specify the parameters directly or use cross-validation \citep{jackson2023efficient}. Root mean squared errors are shown in Table~\ref{tab:monotone} for the unconstrained GP, the generalised Bayes linear fit and the model of \cite{lin2014bayesian}; we label these models \textsc{gp}, \textsc{gbl} and \textsc{l\&d}, respectively. The only simulation where \textsc{l\&d} performs markedly differently is for the step function, the discontinuity of which is unfairly easy to estimate with their methodology as the underlying function is closer to the space of their projection rather than their initial beliefs. Table~\ref{tab:monotone} also offers a comparison of the computational speeds between the models. For \textsc{gp}, we report the timings for a rejection sampler to draw a monotonic sample. We set a maximum iteration count at 1000 and report, in the bottom row, the percent of simulations in which no samples were found. As \textsc{gp} and \textsc{l\&d} are sampling based methodologies, we report the times required for 1000 samples; in practice, the number of required samples may be much larger. In comparing the results we find that the generalised Bayes linear methodology not only offers a functional and generalisable framework, but is as accurate and faster than bespoke methodology.

\begin{table}[h!]
  \caption{Root mean squared errors (x100) for simulated processes as described in the main text. All computational timings are in centiseconds ($\mathrm{cs}$). Reported times for \textsc{gp} and \textsc{l\&d} are based on 1000 samples. The values recorded alongside \textsc{gp} denote the times required for a rejection sampler to sample from \textsc{gp}, note the scaling of $\times 10^3$ for all values. Here, we set a maximum iteration count to 1000; the percentage of simulations that do not yield a valid sample within 1000 samples is recorded in the bottom row. All results are averaged across 100 simulations, bracketed values denote standard deviations.}
  \small
  \begin{center}
      \begin{tabular}{llcccccc}
        & & Flat & Sinusoidal & Step & Linear & Exponential & Logistic \vspace{1mm} \\
        \hline
        \multirow{ 3}{*}{\rotatebox[origin=c]{90}{RMSE}} & \textsc{gp} & 17 (7.8) & 22 (6.2) & 46 (4.3) & 22 (7.7) & 30 (8.0) & 26 (7.1) \\
        & \textsc{gbl} & 12 (7.6) & 21 (6.0) & 45 (4.3) & 19 (6.8) & 23 (7.0) & 23 (7.4) \\
        & \textsc{l\&d} & 13 (7.7) & 20 (6.0) & 41 (5.4) & 20 (6.5) & 24 (6.5) & 22 (7.3) \vspace{1mm} \\
        \hline 
        \multirow{ 3}{*}{\rotatebox[origin=c]{90}{Time}} & \textsc{gp} ($\times 10^3$) & 36 (51) & 21 (38) & 52 (56) & 6.1 (20) & 35 (44) & 50 (54) \\
        & \textsc{gbl} & 7.3 (2.0) & 3.7 (2.2) & 9.3 (2.6) & 3.2 (1.7) & 6.3 (2.6) & 5.3 (2.2) \\
        & \textsc{l\&d} & 550 (200) & 78 (49) & 310 (90) & 42 (36) & 220 (71) & 150 (53) \vspace{1mm} \\ \hline
        & \% NA & 37 & 12 & 89 & 6 & 76 & 48 \vspace{1mm} \\
      \end{tabular}
  \end{center}
  \label{tab:monotone}
  \end{table}

\subsubsection{Emulating an afforestation uptake model}

We provide a more complex application to an agent based model for uptake of subsidies for tree planting in the UK. The simulator has $6$ inputs representing 5 levels of payment associated to planting new trees on private land, and a total budget devoted to the policy. The payments are per hectare for planting managed and unmanaged conifer woodlands ($\phi_1$, $\phi_2$) and deciduous woodlands ($\phi_3$ and $\phi_4$), and per tonne of greenhouse gases removed from the atmosphere $\phi_5$. The total budget is $\phi_6$ and, amongst many other things, the model outputs time series for various ecosystem services of interest to policymakers. For our application, we only consider the total area (in hectares) of trees planted, noting that there are targets in UK law \citep{EnvironmentAct21} on this number by 2050, as well as internal interim milestones. We use existing training data produced for the study described in \cite{nortier2024deep}, to emulate the area of planted trees as a function over the six dimensional parameter space, $\Phi = \{\phi_1, \dots, \phi_6\}$, and time, $t \in \{1, \dots, 150\}$. 

The simulations are conditionally (on $\Phi$) monotone so that the random process $\mathrm{X}(t ; \Phi)$, defined over time and parametrised by $\Phi$, is in $\mathcal{C} = \{\mathrm{X}(t ; \Phi) \in \mathbb{R} \mid \mathrm{X}(t ; \Phi) \leq \mathrm{X}(t' ; \Phi) \text{ for } t \leq t'\}$. Define $\bX(\Phi) \in \mathbb{R}^{150}$ as the $150$-dimensional multivariate output of all times at input $\Phi$. We train a traditional Bayes linear emulator to 366 model runs $\bD = (\bX(\Phi_1), \dots, \bX(\Phi_{366}))$ and withhold two model runs, corresponding to different policies, for prediction, $\bX = (\bX(\Phi_1^*), \bX(\Phi_2^*))$. We make the standard assumption of separability between the inputs and the outputs, so that for any two model runs, $\cov[\bX(\Phi_i), \bX(\Phi_j)] = \bK_t \otimes c(\Phi_i, \Phi_j)$ where $\bK_t \in \mathbb{R}^{150 \times 150}$ and $c(\Phi_i, \Phi_j)$ is a correlation function over the inputs. The $i,j$th elements of $\bK_t$ are defined via a Mat\'{e}rn-$5/2$ covariance function, $c_{5/2}(\cdot)$, with a length-scale of $10$ years and amplitude $\sigma = 1.8\times 10^7$: ${\bK_t}_{[i,j]} = \sigma^2 c_{5/2}(i, j; 10)$. Further, $c(\Phi_i, \Phi_j)$ is defined as $c(\Phi_i, \Phi_j) = \prod_{k=1}^6 c_{5/2}(\phi_{k,i}, \phi_{k,j}; l_k)$ with length scales $\{l_1, \dots, l_6\} = \{3.0, 1.4, 1.3, 1.6, 0.17, 1.0\}$, where the $\phi_{k,j}$ represents parameter $k$ at location $j$. The emulator results for the two withheld policies are shown in the left images of Figure~\ref{fig:trees}. The black lines and shaded regions are the emulator expectation and $\pm 2$ marginal standard deviations, respectively. Marginal standard deviations are also represented by the dashed line and correspond to the right hand y-axis. The dots are the actual simulator values, and we stress that these points are not in the training set. The right images of Figure~\ref{fig:trees} show a generalised Bayes linear emulator trained with respect to the monotonicity constraints in $\mathcal{C}$. There are some interesting points to note. Policy 1 is further from $\mathcal{C}$ than Policy 2. Accordingly, the generalised Bayes linear emulator in Policy 1 has a larger reduction in variance than we see in Policy 2, which only requires a minor adjustment, obeying the constraints that we detail in Section~\ref{sec:gen_adj_var}. The root-mean-squared-errors for the unconstrained emulator are $2.86 \times 10^6 \, \mathrm{ha}$ and $2.10 \times 10^6 \, \mathrm{ha}$ and for the monotone emulator are $1.72 \times 10^6 \, \mathrm{ha}$ and $1.90 \times 10^6 \, \mathrm{ha}$, and so in this instance, enforcing our solution to lie in $\mathcal{C}$ results in a more accurate emulator in mean-square-error. The additional computation incurred by the generalised Bayes linear update is negligible: $\sim 0.5 \, \mathrm{sec}$ on a standard workstation, with 32GB RAM and an Apple M2 Pro processor.

\begin{figure}
    \centering
    \includegraphics[width = \linewidth]{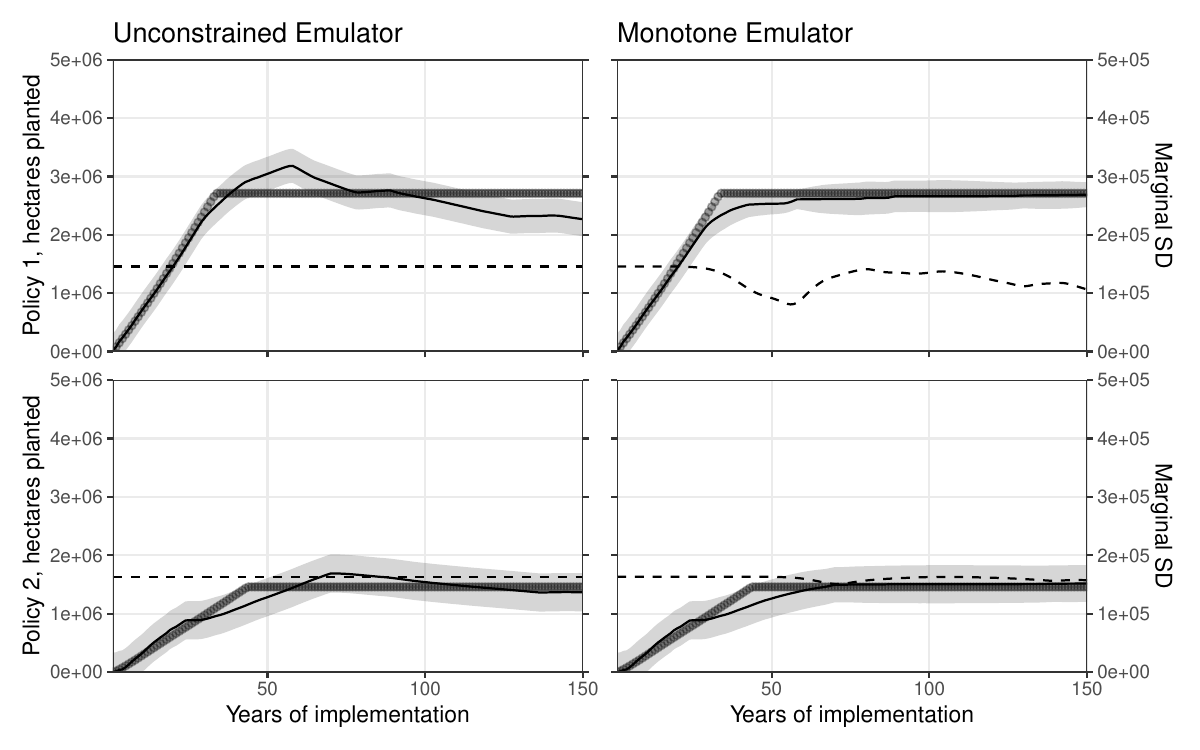}
    \caption{Emulators of hectares of trees planted under two policies and over 150 years, trained on 366 simulation runs. Left images show the emulator predictions from an unconstrained Bayes linear emulator; Right images show results for a generalised Bayes linear emulator trained with monotonicity constraints. The solid lines are the predicted expectation, and shaded regions denote $\pm 2$ standard deviations, also represented by the dashed lines and right hand axis.}
    \label{fig:trees}
\end{figure}

\subsection{Count processes}

As noted in Remark~\ref{rem:bl}, Bayes linear statistics may be used to infer beliefs of varied data, and not just that which exhibits Gaussian like behaviour. Here we provide an example with count data of daily deaths by Lower Tier Local Authority (LTLA) during the COVID19 pandemic in the UK. Spatio-temporal modelling was important during the pandemic, giving government and the public a sense of the state of the disease, through the exponential growth rate (r-number), and the number of deaths being caused. Inferring the growth rate using reported cases or inferring number of deaths given partially reported deaths could be done using Bayesian hierarchical models \citep{guzman2023bayesian}, having a counts model with a rate parameter representing the expected count, and a positive transformation of that parameter following some spatio-temporal process. Whilst a Bayes linear analysis might be performed over, say, logarithmic-deaths, to ensure positivity of the policy relevant random quantity, such an analysis requires beliefs to be held by the analyst in the transformed space and only allows reporting on the policy relevant quantity via bounds derived through Jensen's inequality. Making the quantity of interest number of deaths so that meaningful prior judgements might be understood or even set by health security agencies or policy makers, can often lead to negative estimates at times or in locations when there are few deaths. 

To illustrate and to demonstrate the efficacy of generalised Bayes linear, we consider the number of deaths by LTLA on the 17th of June 2020 (with COVID19 listed on the death certificate). The data are shown in the top left plot of Figure~\ref{fig:covid}. We specify a relatively simple covariance structure, assuming a squared exponential covariance function over space, with the position of each LTLA given by its geometrical centre. We define by $\bX = (\mathrm{X}_1, \dots, \mathrm{X}_{376})$ the unknown mean daily COVID19 deaths for each of the $376$ LTLAs at known geometrical centres $(s_1, \dots, s_{376})$, and by $\bD = (\mathrm{D}_1, \dots, \mathrm{D}_{354})$ the $354$ reported death counts at geometrical centres $(r_1, \dots, r_{354})$. The data are obtained from \cite{UKHSA2024} and are missing values for all Wales LTLAs and Buckinghamshire; we treat this data as missing at random as its missingness is due to clerical reasons. We specify the beliefs $\mathbb{E}[\mathrm{X}_i] = \mathbb{E}[\mathrm{D}_j] = 0$ for $i \in \{1, \dots, 376\}$ and $j \in \{1, \dots, 354\}$, and
\begin{align*}
    \mathrm{cov}[\mathrm{X}_i, \mathrm{X}_j] = \exp&\left(-l^{-2}\norm{s_i - s_j}^2\right), \quad \mathrm{cov}[\mathrm{X}_i, \mathrm{D}_j] = \exp\left(-l^{-2}\norm{s_i - r_j}^2\right) \\
    \mathrm{cov}[\mathrm{D}_i, \mathrm{D}_j] &= \exp\left(-l^{-2}\norm{r_i - r_j}^2\right) + 0.5^2 \mathds{1}\{r_i = r_j\},
\end{align*}
for $l = 85 \mathrm{km}$ where norms, here, are taken to represent geodesic distance. Note that the choice $0.5$ implies that the reported counts being out by 1 is a 2 standard deviation event. We might want to increase this in situations were we might believe COVID19 deaths are misreported. 

 Given the observed data for 17th of June 2020, the adjusted expectation of deaths, $\mathbb{E}_\bd[\bX]$, is shown in the bottom left of Figure~\ref{fig:covid}. Regions $i$ with $\mathbb{E}_\bd[\mathrm{X}_i] < 0$ are shaded and outlined in red. As in Section~\ref{sec:example}, we set $\mathcal{C} = \{\bX \in \mathbb{R}^{354} \mid \bX \succeq 0\}$ to be the non-negative cone and calculate the generalised Bayes linear update, $\mathbb{E}_\bd^\mathcal{C}[\bX]$. This is shown in the bottom right of Figure~\ref{fig:covid}. Due to colour saturation, differences between these updates may not be apparent; we plot these differences in the top right of Figure~\ref{fig:covid}. As opposed to simply zeroing-out the negative valued regions we can see changes to many adjacent LTLAs such that a solution is provided that is most coherent with the initial belief specification. Increasing the lack of confidence in the reporting mechanism can exacerbate the negative estimates in the standard case, making the need for the generalised method all the more acute.

\begin{figure}
    \centering
    \includegraphics[width = 0.8\linewidth]{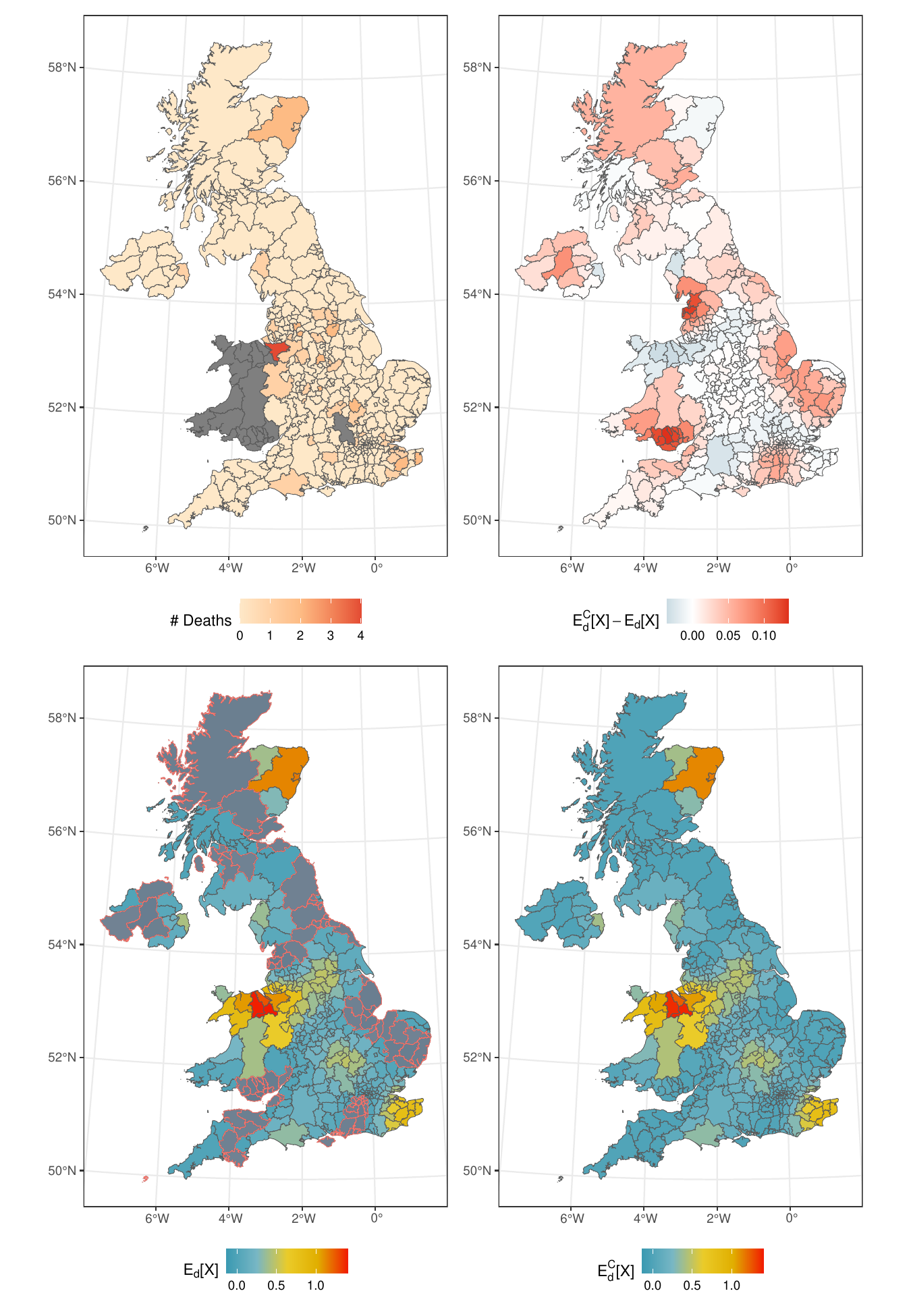}
    \caption{Deaths by Lower Tier Local Authority with COVID19 listed on death certificate on the 17th of June 2020 and statistical fits. The top left shows number of deaths, missing data are shaded grey. The bottom left shows the standard Bayes linear fit to the data, subject to the belief specifications provided in the main text. Regions with negative adjusted expectation are shaded. The bottom right shows the generalised Bayes linear adjustment and the top right shows the difference between the standard and generalised adjustments.}
    \label{fig:covid}
\end{figure}

It is worth noting that, even if a hierarchical Bayesian model is desirable to an analyst in this particular application (or at higher resolution with many more geometrical centres), such models take a long time to fit, perhaps needing the type of sparse precision matrix obtained through bespoke packages \citep{lindgren2015bayesian}. The generalised Bayes linear framework offers a fast way to obtain a prior analysis that might be used under time pressure or used to guide model selection. 

\subsection{Further examples}

We have demonstrated the potential for generalised Bayes linear updating for monotonic and non-negative solution spaces. There are a number of further specifications for $\mathcal{C}$ that can also be useful for common problems in statistical modelling; we focus on examples with convex $\mathcal{C}$ so that a solution is guaranteed. We may estimate probabilities in logistic regression by bounding the values in $\mathcal{C}$ below by 0 and above by 1. Where probability is defined as a statistical process over a domain, we may quickly and easily update our beliefs on the statistical process directly, rather than employing sampling based estimation procedures such as P{\'o}lya--Gamma data augmentation \citep[see][]{polson2013bayesian}. Should $\bX$ instead be a random quantity defined over positive semi-definite matrices, as in \cite{wilkinson1995bayes}, $\mathcal{C}$ may describe the positive semi-definite cone $\mathcal{C} = \{\bX \in \mathbb{R}^{n \times n} \mid x^\T \bX x \geq 0, \; \text{for} \; x \in \mathbb{R}^n\}$. When our adjusted beliefs are not positive semi-definite it is typical to calculate the closest matrix subject to the Frobenius norm; the matrix equivalent to the Euclidean norm \citep[e.g. in][]{astfalck2024coexchangeable, gandy2023tuning}. Further, we may consider adjusting beliefs for $\bX$ whose derivatives are subject to linear constraints. We have already seen an example of this in Section~\ref{sec:monotone} where we may alternatively define $\mathcal{C}$ as $\mathcal{C} = \{\mathrm{X}(t) \in \mathbb{R} \mid \frac{\mathrm{d}}{\mathrm{d}t} \mathrm{X}(t) \geq 0\}$. We may readily extend this concept by restricting $\mathrm{d} \mathrm{X}(t)/\mathrm{d}t$ to other convex subsets, perhaps including an upper bound to limit the rate of growth of $\mathrm{X}(t)$; or to higher order derivatives, for instance setting $\mathrm{d}^2 \mathrm{X}(t)/\mathrm{d}t^2 \geq 0$ or $\mathrm{d}^2 \mathrm{X}(t)/\mathrm{d}t^2 \leq 0$ to enforce $\mathrm{X}(t)$ to be convex or concave, respectively. 

\section{Discussion}\label{sec:discuss}

Bayesian inference as optimisation continues to grow in popularity. We have made the case for a framing of generalised Bayesian inference as one in which an optimal point within a restricted solution space endowed with a geometry is defined to be that which is closest to a generalised data generating process and where the notion of close is specified by the practitioner. Current GBI methods operate in $\mathcal{L}_{2}(\Theta,\mu)$, the space of functions $\theta$ with measure $\mu$ with integral inner product, restrict the solution space to probability density functions (or even certain classes of tractable densities), and tend to use different types of divergence to define the notion of `close'. Bayes linear methods work in Hilbert spaces with product inner product (which naturally define the notion of closeness) and search the solution space of affine functions of the data. 

Inspired by this reformulation, we proposed a principled generalisation to Bayes linear methods to enable restricted solution spaces as subsets of the original Hilbert space. The result is an extremely fast class of Bayes as optimisation methods that can be used under partial prior specification to make inference, or can be used as part of exploratory modelling and data analysis to inform and shape novel statistical modelling or computationally burdensome analyses. Our applications to monotonic regression and spatial counts are the first Bayes linear solutions to these problems, and demonstrate that we have dramatically widened the class of problems that can be addressed within the theory. 

We hope that, by pointing out the overarching geometric framework covering generalised methods (including ours), the GBI community or others might be as inspired to revisit the properties of their methods as we were with ours. In our case, the restriction of the solution space to affine functions, but without necessary coherence restrictions was revisited to provide new solutions. The GBI community are already actively examining alternative divergences and restrictions on solution spaces (Properties 2 and 3). Perhaps further generalisations, for example to the underlying geometries and inner products will open different avenues of research. 

\section*{Acknowledgements}

\if1\blind
{LA is supported by the ARC ITRH for Transforming energy Infrastructure through Digital Engineering (TIDE), Grant No. IH200100009. Danny Williamson was funded by EPSRC projects UQ4Covid (EP/V051555/1) and ADD-TREES (EP/Y005597/1). Cassandra Bird was funded by EPSRC studentship EP/V520317/1. The authors would like to thank Bertrand Nortier and Mattia Mancini for sharing their existing ensemble of tree uptake simulations.} \fi
\if0\blind
{ \textit{Acknowledgements have been redacted for blind review.}} \fi

\bibliographystyle{model2-names.bst}
\bibliography{references}

\appendix

\section{Derivations of Bayes linear updating equations}

\subsection{Proof of adjusted expectation}

Adjusted expectation $\mathbb{E}_\bD[\bX]$ is defined by the orthogonal projection of $\bX$ onto $\mathcal{D} \cup \textcal{1}$. We thus have $\mathbb{E}_\bD[\bX] = \bh_0 + \bH_0 \bD$ where $\bh_0$ and $\bH_0$ ensure
\begin{equation}
    \langle \bX - \bh_0 - \bH_0 \bD, \bh + \bH \bD \rangle = \mathbb{E}[(\bX - \bh_0 - \bH_0 \bD)^\intercal (\bh + \bH \bD)] = 0
\end{equation}
for all $\bh \in \mathbb{R}^{n \times 1}$ and $\bH \in \mathbb{R}^{n \times m}$. Spaces $\mathcal{D}$ and $\textcal{1}$ are orthogonal and so we may equivalently solve for $\bh_0$ and $\bH_0$ so that
\begin{align}
    \mathbb{E}[(\bX - \bh_0 - \bH_0 \bD)^\intercal \bh] &= \sum_i \bh_i \mathbb{E}[(\bX - \bh_0 - \bH_0 \bD)_i] = 0 \text{, and} \\
    \mathbb{E}[(\bX - \bh_0 - \bH_0 \bD)^\intercal \bH \bD] &= \sum_i \sum_j \bH_{ij} \mathbb{E}[(\bX - \bh_0 - \bH_0 \bD)_i \bD_j] = 0.
\end{align}
As this solution must hold for all $\bh \in \mathbb{R}^{n \times 1}$ and $\bH \in \mathbb{R}^{n \times m}$, we find that
\begin{align}
    \mathbb{E}[\bX - \bh_0 - \bH_0 \bD] &= \zero^{n \times 1} \text{, and} \label{eqn:adj_exp_a1}\\ 
    \mathbb{E}[(\bX - \bh_0 - \bH_0 \bD) \bD] &= \zero^{n \times m}. \label{eqn:adj_exp_a2}
\end{align}
Thus, from (\ref{eqn:adj_exp_a1}), we find that $\bh_0 = \mathbb{E}[\bX] - \bH_0 \mathbb{E}[\bD]$ and from (\ref{eqn:adj_exp_a2}) we find $\bH_0 = \cov[\bX, \bD] \var[\bD]^\dagger$. The adjusted expectation is thus solved as
\begin{equation}
    \mathbb{E}_\bD[\bX] = \bh_0 + \bH_0 \bD = \mathbb{E}[\bX] + \cov[\bX,\bD] \var[\bD]^\dagger(\bD - \mathbb{E}[\bD]).
\end{equation}

\subsection{Proof of adjusted variance} \label{app:adj_var}

The adjusted variance is a description of the remaining uncertainty in $\bX$ after $\bD$ is observed. It is defined by the outer product
\begin{equation} \label{eqn:adj_var_a1}
    \var_\bD[\bX] = \mathbb{E}[(\bX - \mathbb{E}_\bD[\bX])(\bX - \mathbb{E}_\bD[\bX])^\intercal]
\end{equation}
and characterises the totality of norm expressions of the linear subspace $\bX - \mathbb{E}_\bD[\bX]$. Substituting $\mathbb{E}_\bD[\bX] = \mathbb{E}[\bX] + \cov[\bX,\bD] \var[\bD]^\dagger(\bD - \mathbb{E}[\bD])$ into (\ref{eqn:adj_var_a1}) yields
\begin{align}
    \var_\bD[\bX] &= \mathbb{E}[(\tilde{\bX} - \cov[\bX,\bD] \var[\bD]^\dagger \tilde{\bD})(\tilde{\bX} - \cov[\bX,\bD] \var[\bD]^\dagger \tilde{\bD})^\intercal] \\
    &= \var[\bX] - \cov[\bX, \bD] \var[\bD]^\dagger \cov[\bD,\bX]
\end{align}
where, here, $\tilde{\bX} = \bX - \mathbb{E}[\bX]$ and $\tilde{\bD} = \bD - \mathbb{E}[\bD]$.

\section{Proofs of Section~2}


\subsection{Proof of Theorem~\ref{the:bl}}

This proof demonstrates the assertion that many common probabilistic Bayesian posterior distributions share the some analytical form as the Bayes linear updating equations. As such, we start from a probability measure that generates $\E[\bD]$, $\E[\bX]$, $\var[\bD]$, $\var[\bX]$, and $\cov[\bD, \bX]$, and we align notation with standard notions of condition expectation in probability theory, rather than the Bayes linear adjusted expectations. We extend the proof in \cite{ericson1969note} to show that the form for the adjusted expectation in \eqref{eqn:adj_exp} is identical to the posterior expectation when we know the posterior expectation is linear in $\bD$, $\E[\bX \mid \bD] = \bA \bD + \bB$. Similar results are also found in \cite{hartigan1969linear}. First, using the law of iterated expectation
\begin{equation*}
    \E[\bX] = \E_{\bD}\left[\E_{\bX}[\bX \mid \bD]\right] = \bA \E[\bD] + \bB
\end{equation*}
and so $\E[\bX] - \bA \E[\bD] = \bB$. We also have, by definition of covariance, $\E[\bD \bX^\T] = \cov[\bD, \bX] + \E[\bD] \E[\bX]^\T$ which we equate with
\begin{align*}
    \E[\bD \bX^\T] &= \E_\bD[\bD \E_\bX[\bX \mid \bD]^\T] = \E_\bD[\bD (\bA \bD + \bB)^\T] \\
    &= \var[\bD]\bA^\T + \E[\bD]\E[\bD]^\T\bA^\T + \E[\bD]\E[\bX]^\T - \E[\bD]\E[\bD]^\T\bA^\T \\
    &= \var[\bD]\bA^\T + \E[\bD]\E[\bX]^\T
\end{align*}
to yield the solution $\bA = \cov[\bD, \bX]\var[\bD]^\dagger$. Substituting this, along with $\bB = \E[\bX] - \bA \E[\bD]$, into $\E[\bX \mid \bD]$ yields \eqref{eqn:adj_exp}. Given \eqref{eqn:adj_exp}, the posterior variance $\var[\bX \mid \bD]$ follows \eqref{eqn:adj_var}, proof of this follows similarly to that provided in Appendix~\ref{app:adj_var}.

\subsection{Further comments to Remark~\ref{rem:bl}}

An important family of probability distributions that exhibits posterior linearity in $\bD$ is data generated from an exponential family with conjugate prior distribution. To state this more formally, assume $\bD \sim p(\bD \mid \bX)$ where $p(\bD \mid \bX) \propto \exp\{\eta(\bX) \cdot \bD - \kappa(\bX)\}$ is the exponential family in canonical form, and assume the prior distribution $p(\bX) \propto \exp\{n_0 x_0 \cdot \bX - n_0 \kappa(\bX)\}$. First shown in \cite{diaconis1979conjugate} for the general case of multivariate $\bX$, it is a well known result that the posterior expectation under this model is linear in $\bD$ and so the results above hold. The implications of which are that the Bayes linear updating equations in \eqref{eqn:adj_exp} and \eqref{eqn:adj_var} also describe the first two posterior moments for many common probabilistic Bayesian models, for instance Poisson/Gamma, Bernoulli/Beta and Gamma/Gamma combinations of likelihoods/priors, and not just the commonly considered case of Gaussians.

\end{document}